\newtheorem{theorem}{Theorem}
\newtheorem{lemma}{Lemma}
\begin{document}

\title{Not Every Domain of a Plain Decompressor Contains the Domain
of a Prefix-Free One\thanks{Partially supported by RFBR 0901-00709-a and by NAFIT ANR-08-EMER-008-01 grants}}

\author{Mikhail Andreev\thanks{Moscow State Lomonosov University, Mathematics Department, Logic and Algorithms Theory Division, Moscow, Russia, \texttt{amishaa@mail.ru}}
 \and
	   Ilya Razenshteyn\thanks{Moscow State Lomonosov University, Mathematics Department, Logic and Algorithms Theory Division, Moscow, Russia, \texttt{ilyaraz@gmail.com}} \and
	      Alexander Shen\thanks{LIF Marseille, CNRS \& University Aix--Marseille, France, on leave from Institute of Information Transmission Problems, Moscow, Russia, \texttt{alexander.shen@lif.univ-mrs.fr}, \texttt{sasha.shen@gmail.com}}
	      }

\maketitle

\begin{abstract}
C.~Calude, A.~Nies, L.~Staiger, and F.~Stephan  posed the following question about the relation between plain and prefix Kolmogorov complexities (see their paper in DLT 2008 conference proceedings): does the domain of every optimal decompressor contain the domain of some optimal prefix-free decompressor? In this paper we provide a negative answer to this question.
\end{abstract}

\section{Introduction}
\label{sec:introduction}

Let $D \colon \{0,1\}^* \to \{0,1\}^*$ be a computable partial function (used as a decompressor). Kolmogorov complexity of $x \in \{0,1\}^*$ with respect to $D$ is defined as the length of its shortest $D$-description:
        $$
C_D(x) := \min_{\{y\mid D(y) = x\}} l(y).
        $$
There exists an optimal decompressor $U$ such that $C_{U}$ is minimal up to $O(1)$. $C_{U}(x)$ is called \emph{plain complexity} of $x$ and is usually denoted by $C(x)$.

A decompressor is called \emph{prefix-free} if its domain is prefix-free (if $u$ is a prefix of $v$, the decompressor cannot be defined on both $u$ and $v$). Again it can be proved that there exists an optimal (up to $O(1)$) prefix-free decompressor $V$. $C_{V}(x)$ is called \emph{prefix complexity} of $x$ and is usually denoted by $K(x)$. (See, e.g., \cite{uppsala-notes} for more details.)

In~\cite{cnss} Calude et al. characterized domains of optimal plain and prefix decompressors.
They did not show any relation between domains of optimal plain decompressor and prefix one, but they posed the following question: is it true that the domain of every optimal plain decompressor contains the domain of some optimal prefix decompressor? We answer this question in the negative:

\begin{theorem}
There exist an optimal plain decompressor~$D$ with domain $S$ such that no set $T\subset S$ is the domain of an optimal prefix-free decompressor.
\end{theorem}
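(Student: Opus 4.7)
The plan is a direct effective construction of $D$ via a diagonalization that rules out every candidate optimal prefix-free decompressor whose domain sits inside $S$. Fix a universal plain decompressor $U$ and a universal prefix-free one $W$; denote the induced complexities by $C$ and $K$. Enumerate all computable partial functions $V_0,V_1,\dots$. An optimal prefix-free decompressor with domain contained in $S$ must be some $V_e$ satisfying $K_{V_e}=K+O(1)$; consequently it suffices, for every pair $(e,c)\in\mathbb{N}^2$, to arrange the requirement
\[
R_{e,c}\colon\ \operatorname{dom}(V_e)\not\subseteq S\quad\text{or}\quad \exists x\colon K_{V_e}(x)>K(x)+c.
\]

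The construction proceeds in stages, with two parallel tasks at each stage. First, plain-optimality is maintained by assigning to each $x$ found to have a new shortest $U$-description a ``safe'' descriptor $z_x\in S$ of length $\le C(x)+O(1)$ with $D(z_x)=x$, drawn from a pool of strings that are currently unreserved. Second, the requirements $R_{e,c}$ are processed in priority order: each picks a fresh witness $x_{e,c}$ (refreshed upon injury) and commits to never adding to $S$ any string $y$ enumerated into $V_e^{-1}(x_{e,c})$ of length $\le K(x_{e,c})+c$. If $V_e$ eventually enumerates such a $y$, then $y\in \operatorname{dom}(V_e)\setminus S$ and the first disjunct of $R_{e,c}$ is met; otherwise the shortest $V_e$-code of $x_{e,c}$ exceeds $K(x_{e,c})+c$ and the second disjunct holds.

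The main obstacle is the tension between forbidding strings (to defeat each $V_e$) and keeping enough short strings available to serve as safe descriptors for plain-optimality. The resolution is a Kraft-type budget: since each $V_e$ has prefix-free domain, $\sum_{y\in\operatorname{dom}(V_e)}2^{-l(y)}\le 1$, so the total measure of strings $V_e$ can ever claim as images of $x_{e,c}$ is bounded. By choosing the witnesses so that $K(x_{e,c})+c$ grows sufficiently fast in the priority index $\langle e,c\rangle$, the aggregate measure of forbidden strings of length $\le n$ stays $o(1)$, leaving $2^n-o(2^n)$ strings at each length available for safe descriptors. This preserves plain optimality with $O(1)$ overhead. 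The delicate step is the finite-injury analysis: one must verify that each $R_{e,c}$ stabilizes on some witness, using the fact that the number of forbidden $y$ of length $\le K(x_{e,c})+c$ per requirement is bounded by $2^{K(x_{e,c})+c}$, so only finitely many re-choices occur before $R_{e,c}$ is permanently satisfied.
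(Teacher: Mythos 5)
Your proposal takes a genuinely different route from the paper's (which fixes a \emph{decidable} set $A$ with a controlled block structure in advance, then uses a lower semicomputable semimeasure argument plus a non-effective density/exhaustion argument to kill any prefix-free $V$ with $\operatorname{dom}(V)\subseteq A$). A direct priority/finite-injury diagonalization is an attractive alternative, but as written it has gaps that I do not see how to close.

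The central difficulty is that the Kraft budget does not give you the measure bound you need. For a single requirement $R_{e,c}$ with witness $x_{e,c}$, the strings you commit to forbidding are $\{y\in V_e^{-1}(x_{e,c}) : l(y)\le K(x_{e,c})+c\}$, and the Kraft inequality only bounds their total measure by $1$. A spiteful prefix-free $V_e$ can, say, map \emph{every} string of length $K(x_{e,c})+c$ to $x_{e,c}$, forcing you to forbid essentially the whole level. The remedy you suggest — making $K(x_{e,c})+c$ grow fast in $\langle e,c\rangle$ — points the wrong way: forbidden strings have length \emph{at most} $K(x_{e,c})+c$, so increasing that threshold enlarges, rather than shrinks, the set of lengths a requirement can contaminate, and does nothing to bound the fraction forbidden at any given length $n$. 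Your claim that the aggregate forbidden measure stays $o(1)$ is therefore unjustified.

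There is a second, more structural gap, which I think is the real reason a naive requirement-by-requirement diagonalization is insufficient. You can only commit to keep \emph{future} strings out of $S$; strings already in $S$ are out of your hands. Plain optimality forces you to enumerate many short strings into $S$, and $V_e$ can simply wait: it observes $S_s$ and your choice of $x_{e,c}$, then maps some $y\in S_s$ of length $\le K(x_{e,c})+c$ to $x_{e,c}$, costing it only $2^{-(K(x_{e,c})+c)}$ of its Kraft budget. Since $\sum_x 2^{-K(x)-c}\le 2^{-c}$, such an adversary can afford to cover \emph{every} string you might pick as a witness, so the asserted bound ``only finitely many re-choices'' does not follow from anything stated. (The bound $2^{K(x_{e,c})+c}$ you cite controls the number of forbidden strings per witness, not the number of witnesses you will burn through.) The only real weapon against $V_e$ is the global constraint $\operatorname{dom}(V_e)\subseteq S$, and to exploit it you must reason about the \emph{prefix-free structure of $S$ as a whole}, not about one target string at a time. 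That is exactly what the paper's construction does: the set $A$ is engineered so that, for every candidate basic set, there are arbitrarily long blocks of consecutive levels on which $A$ looks the same, and the density lemma together with the iterated construction of disjoint basic sets $B_0,B_1,\dots$ (each of measure $\ge\varepsilon$) shows that the fraction of free allowed strings must eventually drop below $\varepsilon=2^{-3c}$ on the bottom of some thick block, at which point a cheap semimeasure allocation wins. I don't see how to recover an analogue of that global measure argument within your per-requirement framework without effectively re-proving the paper's density and exhaustion lemmas, at which point the finite-injury scaffolding is no longer doing the work.
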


Note that for every decidable set $A$ that contains a fixed fraction (say, at least one third) of $n$-bit strings for every $n$, there is an optimal plain decompressor whose domain is a subset of $A$. Indeed, in this case there exists an injective mapping $p\mapsto a(p)$ such that for every string $p$ the string $a(p)$ belongs to $A$ and is two bits longer than~$p$. (Our assumption guarantees that there is enough strings of this length in~$A$.) Then let us take any optimal decompressor $U$ and replace $k$-bit descriptions by $(k+2)$-bit descriptions inside $A$: let $V(a(p))$ be equal to $U(p)$. Then $V$ is an optimal decompressor whose domain is a subset of $A$. (A more general question: which sets are the domains of an optimal plain decompressor? --- is answered in~\cite{cnss}.) 

So it is enough to show that there exists a decidable set $A$ with this property (containing at least $1/3$ of $n$-bit strings for every $n$) such that there is no optimal prefix-free decompressor whose domain is a subset of $A$. From now on we forget about plain decompressors: we need only to construct such a set $A$. This construction is provided in the next section; in the rest of this section we discuss the intuition behind it and the result itself.

The useful tool in the prefix complexity theory is provided by an observation often called \emph{Kraft--Chaitin lemma}. Consider the following ``memory allocation'' game: at each round Alice gives a natural number $n$ and Bob replies with a string of length $n$. The restriction for Alice is that the sum of $2^{-n}$ for all her numbers does not exceed $1$; the restriction for Bob is that none of his strings is a prefix of another one. Kraft--Chaitin lemma says that Bob has a computable winning strategy in this game. (See, e.g., \cite{uppsala-notes}, p.~28.)

Informally, the question posed in~\cite{cnss} asks whether this remains true if some strings (a fixed fraction for every length) are forbidden for Bob (and the allowed sum for Alice is adjusted accordingly). The answer is no: one can choose the forbidden part of every $\{0,1\}^n$ in such a way that it cripples Bob's ability to win. Technically, we need to consider a more complicated game, since complexity is defined up to a constant. We do not explain this game in details (but note that the game approach that goes back to Andrej Muchnik~\cite{muchnik-game} was an important tool for~us). Instead, we give a self-contained proof that combines game-theoretic and recursion-theoretic arguments.

Finally, one may say that the question itself is a bit artificial: one may ask instead whether for every optimal plain decompressor there is some \emph{restriction} of it (on some smaller enumerable domain) that is an optimal prefix-free decompressor. In this form, however, the answer is negative for obvious reasons: consider an optimal plain decompressor $U$ where two different strings $s$ and $t$ have unique descriptions ($U$-preimages) $p_s$ and $p_t$, and, say, $p_s$ is a prefix of $p_t$.

\section{Construction}
\label{sec:construction}

\subsection*{Statement of the Lemma}

As explained in the previous section, it is enough to prove the following lemma:

\begin{lemma}
There exists a decidable \textup{(}synonyms: computable, recursive\textup{)} set $A$ of strings with the following properties:

\textup{(1)}~For every $n$ the set $A$ contains at least $1/3$ of all strings of length $n$;

\textup{(2)}~There is no optimal prefix-free decompressor whose domain is a subset of~$A$.
\end{lemma}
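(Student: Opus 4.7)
My plan is to build $A$ by a priority argument, with one requirement $R_{i,c}$ for each pair of positive integers $(i,c)$, where $V_1, V_2, \ldots$ enumerates all partial computable functions. The requirement is
\[
R_{i,c}:\ \mathrm{dom}(V_i)\not\subseteq A\ \text{ or }\ \exists x\ C_{V_i}(x)>K(x)+c.
\]
Satisfying every $R_{i,c}$ yields the lemma, because an optimal prefix-free $V=V_i$ with $\mathrm{dom}(V_i)\subseteq A$ would, by definition of optimality, satisfy $C_{V_i}\le K+c_0$ for some fixed $c_0$, violating $R_{i,c_0}$.

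To prevent conflict between requirements, I would assign the $j$-th requirement (under a fixed bijection $j=\langle i,c\rangle$) a disjoint block of lengths $Z_j=[L_{j-1},L_j)$; the requirement $R_j$ may add strings to $\bar A$ only at lengths in $Z_j$. For each $R_{i,c}$ I choose $M_j=2^{L_{j-1}+j+2}$ candidate strings $x_1,\ldots,x_{M_j}$ of common complexity bound $n_j=\log M_j+O(1)$, and pick $L_j>n_j+c$ so that every $V_i$-description of a candidate of length $\le n_j+c$ has length $<L_j$. Two things can obstruct defeating $V_i$ at a given candidate $x_k$: (i)~$V_i$ may already have a description of $x_k$ at some length $<L_{j-1}$ that earlier requirements did not remove from $A$; (ii)~$V_i$ may enumerate so many descriptions of $x_k$ of length in $Z_j$ that I cannot add them all to $\bar A$ within the density budget.

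The heart of the argument is a two-step Kraft counting. First, obstruction (i) can occur for at most $2^{L_{j-1}+1}$ of the candidates (one per length-$<L_{j-1}$ string), so at least $M_j/2$ candidates remain ``uncovered''. Second, Kraft's inequality applied to $V_i$'s prefix-free domain over these uncovered candidates shows some $x_{k^*}$ has total Kraft weight at most $2/M_j$ on descriptions of length in $[L_{j-1},n_j+c]$; hence at each length $\ell\in Z_j$ there are at most $2^{\ell+1}/M_j=2^{\ell-L_{j-1}-j-1}$ such descriptions, comfortably below the density budget $(2/3)\cdot 2^\ell$ that $R_j$ alone uses at length $\ell$. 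Adding all these descriptions of $x_{k^*}$ to $\bar A$ leaves no $V_i$-description of $x_{k^*}$ of length $\le n_j+c$ in $A$, so $C_{V_i}(x_{k^*})>n_j+c\ge K(x_{k^*})+c$, satisfying $R_{i,c}$.

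Since $k^*$ is not known in advance, $R_j$ runs a dynamic strategy: at each construction stage, work on the smallest-indexed uncovered candidate still within Kraft budget, adding its newly enumerated descriptions (of length in $Z_j$) to $\bar A$; upon becoming covered or exceeding budget, move to the next uncovered candidate. The Kraft counting ensures the process stabilizes on a successful $x_{k^*}$. The main obstacle, where the bulk of the technical work lies, is reconciling this dynamic construction with the decidability of $A$: I would let $L_j$ grow very rapidly as a computable function of $j$, fast enough that by the time the block $Z_j$ is reached the enumeration of $V_{i_j}$ has had time to reveal the descriptions that matter for $R_j$, so that the fate of each individual string is determined by a finite, uniformly computable computation on its length.
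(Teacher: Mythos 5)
The high-level plan (a priority argument with disjoint blocks of lengths, one requirement per pair $(i,c)$, and a Kraft-counting argument to identify a candidate $x_{k^*}$ whose short $V_i$-descriptions can all be expelled from $A$) is internally coherent as far as it goes, and the counting steps check out. But you have correctly identified the crux of the matter in your last paragraph, and it is not a technicality that a growth condition on $L_j$ can fix: it is the place where the proof actually breaks.

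The obstruction is the following. Your construction of $A$ consults the enumeration of $\mathrm{dom}(V_{i_j})$ in order to decide which strings of length $\ell\in Z_j$ to discard, and there is \emph{no} computable bound, as a function of $\ell$ or $j$, on how long that enumeration must run before it reveals the descriptions you care about. This is exactly the halting problem: you cannot know, after any computable number of steps, whether $V_{i_j}$ will later produce a short description of your current candidate at a length you have already committed. Any scheme of the form ``run the construction for $T(\ell)$ steps, then freeze level $\ell$'' therefore leaves open the possibility that a relevant description of $x_{k^*}$ at length $\ell$ appears only after step $T(\ell)$; conversely, waiting indefinitely makes $A$ merely co-c.e.\ rather than decidable. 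Making $L_j$ huge does not help, because the enumeration can simply dawdle for an amount of time unrelated to $L_j$. There is also a second-order version of the same problem: your strategy changes candidates dynamically, so when it finally settles on $x_{k^*}$ the sweep may already be past low lengths in $Z_j$ at which descriptions of $x_{k^*}$ sit, and those levels cannot be reopened.

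The paper avoids this entirely by separating the two tasks. It first constructs $A$ by a \emph{static}, combinatorially universal recipe that makes no reference to any $V_i$: every basic subset of $\Omega$ of measure at least $1/3$ is chosen to be the $n$-th layer of $\overline A$ for infinitely many long blocks of consecutive $n$. Decidability of $A$ is then immediate. The non-effective reasoning is pushed into the \emph{verification} that no optimal prefix-free decompressor can live inside this fixed $A$: given such a $D$, one builds a lower semicomputable semimeasure $q$ (a quantity that is allowed to be only semicomputable) by watching the enumeration of $D$, and a density argument with disjoint basic sets $B_0, B_1,\dots$ shows that the waiting event Alice needs does occur. In short, the enumeration-dependent, non-effective part of the argument is moved into a semimeasure construction where non-effectiveness is permissible, whereas your proposal places it inside the construction of $A$, where it is fatal. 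To repair your approach you would need to find a way to decide $A$ level by level using information about $V_{i_j}$ that is guaranteed to be available in computably bounded time, and I do not see how to do that; this is not a detail but the main difficulty that the paper's reorganization is designed to circumvent.
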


\subsection*{Constructing $A$}

Describing the construction, we identify binary strings with vertices of the full binary tree: empty string is the root, string $x$ has sons $x0$ and $x1$. The set $\Omega = \{0, 1\}^{\omega}$ of infinite binary sequences is identified with $[0,1]$. For each string $x$ we define an interval $I_x\subset [0,1]$; empty string corresponds to the entire $[0,1]$; the intervals $I_{x0}$ and $I_{x1}$ are left and right halves of $I_x$. In $\Omega$ the interval $I_x$ corresponds to the subtree that consists of binary sequences that have prefix $x$, and we use the notation $I_x$ both for intervals in $[0,1]$ and in $\Omega$.

The intervals $I_x$ are called \emph{basic} intervals in the sequel. A \emph{basic subset} of $\Omega$ is a finite union of basic intervals; we may assume without loss of generality that these intervals have the same length and are disjoint, i.e., correspond to different vertices at the same level of the tree. (In $[0,1]$ we consider intervals that share an endpoint as disjoint.) If a basic set $V$ equals the union $\cup_{x\in X} I_x$ where $X \subset \{0,1\}^n$, we say that $X$ \emph{represents $V$ at level $n$}. Each basic set can be represented at all sufficiently high levels.

We construct the set $A$ layer by layer in such a way that every basic set of measure at least $1/3$ is represented by some layer of $A$: for every basic set $V$ of measure at least $1/3$ there exists $n$ such that $A\cap \{0,1\}^n$ represents $V$ at level $n$. (In a sense, this makes $A$ ``universal'': every possible restriction appears somewhere.) Moreover, every basic set $V$ (of measure at least $1/3$) should be represented by infinitely many layers that form large groups of subsequent layers: there are infinitely many $n$ such that $V$ is represented by $A$ at levels $n,n+1,\ldots, 2n$). It is easy to find a decidable set $A$ with this property (the family of all basic sets is countable and can be effectively enumerated, so we allocate infinitely many groups of layers for every basic set).

It remains to show (assuming that $A$ has these properties) that no optimal prefix-free decompressor can have a domain that is a subset of $A$.

\subsection*{Density}

Assume that $D$ is an optimal prefix-free decompressor whose domain is a subset of $A$.  The strings $x$ where $D$ is defined form a prefix-free set.
The corresponding intervals $I_x$ are disjoint; let $\mathbf{D}\subset\Omega$ be the union of these intervals.

\begin{lemma}[Density]
$\mathbf{D}$ intersects  any basic set of measure at least $1/3$.
\end{lemma}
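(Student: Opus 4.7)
The plan is to argue by contradiction, exploiting the structural property of $A$ that for each basic set $V$ of measure $\geq 1/3$ there are arbitrarily long contiguous blocks of levels $[n,2n]$ at which $A \cap \{0,1\}^m$ coincides with the canonical level-$m$ representation of $V$. Assuming such a $V$ is disjoint from $\mathbf{D}$, I will use this block-structure to force the domain of $D$ to have a large gap in the lengths of its strings, and then obtain a contradiction from the optimality of $D$.

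First I would pick an $n$ from the infinite list of starting-levels of such a block for the assumed $V$. The key observation is that for every $m \in [n,2n]$ and every $y \in A \cap \{0,1\}^m$, we have $I_y \subseteq V$ (since at these levels $A \cap \{0,1\}^m$ is precisely the level-$m$ representation of $V$); if $y$ belonged to the domain of $D$ then $I_y$ would sit inside $\mathbf{D} \cap V$, which is empty by assumption. Because the domain of $D$ is contained in $A$, this rules out any $D$-description of any length $m \in [n,2n]$: $D$ has a total gap in its description-lengths across the whole interval $[n,2n]$.

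To conclude, I invoke optimality. Fixing a universal optimal prefix-free decompressor $D_0$ and setting $K = C_{D_0}$, optimality of $D$ gives $C_D(x) = K(x) + O(1)$ uniformly in $x$. Taking $x$ to be any incompressible string of length $\lceil 3n/2 \rceil$ gives the standard estimate $K(x) = 3n/2 + O(\log n)$, which for sufficiently large $n$ in the infinite list lies strictly inside $[n,2n]$; hence $C_D(x) \in [n,2n]$, so $D$ must have some description of $x$ whose length falls in $[n,2n]$, contradicting the gap. The only auxiliary ingredient is this complexity estimate for incompressible strings, which is routine; the real content of the argument is the gap-forcing step, which depends essentially on having whole contiguous blocks of levels dedicated to $V$ in the construction of $A$ rather than just isolated levels.
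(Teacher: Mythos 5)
Your proof is correct and takes essentially the same approach as the paper: the block $[n,2n]$ of levels where $A$ represents $V$ forces a gap in the possible description lengths for $D$, and an incompressible string of length about $3n/2$ contradicts optimality. The paper compresses this into one line (``most of the strings of length $1.5n$ have complexity between $n$ and $2n$''), but the substance is identical.
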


\begin{proof}
Let $V$ be a basic set of measure at least $1/3$. According to the assumption, there are infinitely many $n$ such that $V$ is represented by $A$ at all levels $n,n+1,\ldots,2n$. If $\mathbf{D}$ does not intersect $V$, this  implies that $D$ is undefined on strings of lengths $n\ldots 2n$, which is impossible for an optimal $D$ (most of the strings of length $1.5n$ have complexity between $n$ and $2n$ for large values of $n$, so description of those lengths should exist).	
\end{proof}

\subsection*{Splitting the task}

Prefix complexity can be equivalently defined as the logarithm of maximal lower semicomputable semimeasure. We use this fact in one direction: if $x\mapsto q(x)$ is a lower semicomputable non-negative function and $\sum_x q(x)\le 1$, then $K(x)\le -\log_2 q(x) +O(1)$. We need to get a contradiction and show that every prefix-free function $D$ defined on the subset of $A$ is not an optimal prefix-free decompressor. For this purpose we construct a lower semicomputable non-negative function $x\mapsto q(x)$ and show that $C_D(x)\le -\log_2 q(x)+O(1)$ is false.

Replacing $O(1)$-notation by an explicit statement, we obtain the following claim:

\begin{quote}
for every $c=1,2,\ldots$ there exists $x$ such that $C_D(x)\ge-\log_2 q(x)+c$.
\end{quote}

We achieve this by constructing for each $c$ a (uniformly) lower semicomputable function $x\mapsto q_c(x)$ such that  $\sum_x q_c(x) \le 2^{-c}$ and $C_D(x)\ge-\log_2 q_c(x)+c$ for some $x$. Then we let $q(x)=\sum_c q_c(x)$; the sum $\sum_x q(x)$ does not exceed $1$ since the corresponding sum for $q_c$ is bounded by $2^{-c}$ and $\sum_c 2^{-c}\le 1$. (We can use other converging series instead of $\sum_c 2^{-c}$.)

\subsection*{Constructing $q_c$}

It remains to show how one can ``lower semicompute'' (=enumerate from below) some function $q_c$ with the required property while watching the enumeration of the graph of $D$. Imagine that Alice is given some ``capital'' $2^{-c}$ and is allowed to distribute this amount between different strings~$x$ (note that we distribute capital between strings that form the image of $D$); her goal is to allocate at least $2^c\cdot 2^{-C_D(x)}$ to some $x$. Of course, Alice does not know the final value of $C_D(x)$; it can decrease later (after the allocation is made). So Alice needs to guarantee that her allocation still prevails for some $x$ independently of what happens after the allocation is done.

How can Alice achieve this goal? To explain her strategy, let us introduce some terminology. The vertices (strings) in $A$ are \emph{allowed}, and the strings outside $A$ are \emph{prohibited}.  (For each level at least $1/3$ of all strings of this length are allowed.)

These notions do not depend on time (i.e., on the number of steps in the enumeration of the domain of $D$). The other notion is dynamic. Let $\bar D$ be the part of the domain of $D$ that already appeared in the enumeration process. A string $u$ is \emph{free} at that step if $\bar D \cup \{u\}$ is prefix-free. (A string that is not free cannot appear later in the domain of $D$ since this domain should remain prefix-free.) In terms of $\Omega$ this definition can be reformulated as follows: $u$ is free if $I_u$ and the set $\overline{\mathbf{D}}$ of all sequences that have prefix in $\bar D$ are disjoint.

If at some level there are no free allowed strings, this guarantees that no new strings of this length will appear in the domain of $D$.

A free string can later become non-free but not vice versa. Note also that an extension of a free string is free, so the fraction of free strings at level $n$ is a non-decreasing function of $n$ (at any moment).

Only allowed free strings can be later used as descriptions, so if at some level and at nearby levels they form a very small minority, Alice can use this fact to achieve her goal.  Let us make this statement more precise.

\subsection*{Winning case}

Assume that at all levels in some interval (say, between $l$ and $L$) the allowed strings represent the same basic set. Then the fraction of free allowed strings of length $n$ increases as $n$ increases (from $l$ to $L$).  Assume that at level 	$L$ this fraction is still less than some small $\varepsilon >0$.

What can Alice do in this case? She can allocate $2^c \cdot 2^{-L}$ to many (say, $N$; the value of $N$ will be chosen later) different strings that have no description yet (do not belong to the image of the current part of $D$).  If this turns out to be insufficient for her to win, each of these $N$ strings gets later a description of length at most $L$ (otherwise Alice still prevails on this string). These descriptions are different (moreover, none of them is a prefix of another one). Only $2^l$ descriptions may have length less than $l$, so at least $N-2^l$ of them are in  our interval (have lengths between $l$ and $L$). All these descriptions were free when Alice made her move, so at that moment the fraction of free allowed strings of length $L$ is at least
	$$
(N-2^l)/ 2^{L}
	$$
(If a free allowed string appears at an intermediate level between $l$ and $L$, this can only increase the fraction, since it can be replaced by several free allowed strings at level $L$.)

We come to a contradiction if
	$$
(N-2^l)/2^{L}\ge \varepsilon,
	$$
i.e.,
	$$
N\ge \varepsilon \cdot 2^L + 2^l
	$$
Recall that the total capital of Alice is bounded by $2^{-c}$, so the allocated amount needed to win is
	$$
(\varepsilon \cdot 2^L + 2^ l) \cdot 2^c \cdot 2^{-L}= \varepsilon \cdot 2^c + 2^c/2^{L-l}.
	$$
Therefore, Alice wins if both $\varepsilon \cdot 2^c$ and $2^c/2^{L-l}$ are bounded by $2^{-c}/2$. Both conditions are satisfied, for example, if
	$$
\varepsilon = 2^{-3c} \ \text{ and }\ L-l \ge 3c.
	$$

\subsection*{Strategy for Alice}

We arrive to the following strategy for Alice.

For a given $c$, Alice waits until an interval $[l,L]$ appears where
	
\begin{itemize}

\item[\textbullet] $L-l \ge 3c$;

\item[\textbullet] allowed strings represent the same basic set at all levels between $l$ and $L$;

\item[\textbullet] the (current) fraction of free allowed strings at level $L$ is less than $\varepsilon=2^{-3c}$.

\end{itemize}

As soon as such an interval appears, Alice allocates $2^c \cdot 2^{-L}$ to $N=\varepsilon \cdot 2^L + 2^l$ fresh strings (that have no descriptions yet).

As we have seen, this guarantees that Alice wins, i.e., that
$q_c (u) \ge 2^c \cdot 2^{-C_D(u)}$ for one of these strings.

\subsection*{Why it helps}

It remains to show that the event that Alice is waiting for will indeed happen. Assume that it is not the case. Recall that (by our construction) every basic set is represented infinitely many times by blocks of levels, and all these blocks (except for finitely many of them) are thick enough (have $L-l\ge 3c$).  Therefore, the fraction of free allowed vertices at the bottom line of each block never drops below $\varepsilon = 2^{-3c}$.

This leads to a contradiction in the following way. Fix some block (``the first block'') that is thick enough and wait until the fraction of free allowed vertices at its bottom level stabilizes.  Let $B_0$ be the basic set that is represented by the set of free allowed vertices at this level; by assumption, its measure is at least $\varepsilon$.

If the measure of $B_0$ is at least $1/3$, we get a contradiction with density lemma. So it is less than $1/3$ (and therefore $2/3$), so there exists second block below the first one where prohibited (=not allowed) elements represent $B_0$. At the bottom line of this block the fraction of free allowed strings also never drops below $\varepsilon$. Wait until it stabilizes and let $B_1$ be the basic set that corresponds to the free allowed strings at this level.  By construction $B_0$ and $B_1$ are disjoint (we considered only allowed strings while constructing $B_1$, and $B_0$ corresponds to prohibited strings).

If the measure of $B_0 \cup B_1$ is at least $1/3$, we again get a contradiction with density lemma (since $B_0\cup B_1$ and $\mathbf{D}$ are disjoint; recall that we wait for the stabilization). So we can find a third block where $B_0\cup B_1$ is prohibited, wait for the stabilization at its bottom line, construct $B_3$ etc.

Finally we get a contradiction since each block contributes at least $\varepsilon$ to the measure and at some point we exceed the threshold $1/3$.

Technical remarks:  (1) The threshold $1/3$ can be replaced by any other value not exceeding $1/2$: we need to get a contradiction before the size of the prohibited part becomes too large. In our argument we may prohibit up to $2/3$ of all strings and $1/3$ is enough for a contradiction.

(2) The construction of $B_0,B_1,\ldots$ is not effective but this is not necessary since we only prove the existence of a moment when the fraction of free allowed strings drops below~$\varepsilon$.

\section{Acknowledgments}

We thank all the participants of Kolmogorov seminar and Undergraduate Seminar at the Logic and Theory of Algorithms division of Mathematics Department, Moscow Lomonosov State University.


\begin{thebibliography}{9}
\bibitem{cnss}
Cristian S.~Calude, Andr\'e Nies, Ludwig Staiger, Frank Stephan,
Universal recursively enumerable sets of strings. In: \emph{Developments in
Language Theory, 2008}, Lecture Notes in Computer Science, 5257 (2008),
p.~170--182.

\bibitem{muchnik-game}
Andrej A. Muchnik, Ilya Mezhirov, Alexander Shen, Nikolay Vereshchagin,
\emph{Game interpretation of Kolmogorov complexity}, \texttt{arxiv:1003.4712}

\bibitem{uppsala-notes}
Alexander Shen, \emph{Algorithmic Information Theory and Kolmogorov Complexity}. Lecture notes of an introductory course at Uppsala university,
available at \texttt{www.it.uu.se/research/publications/reports/2000-034}.

\end{thebibliography}
\end{document}